\documentclass[letterpaper, 10 pt, conference]{ieeeconf}  %
\IEEEoverridecommandlockouts                              
\usepackage{amsmath,amssymb,amsfonts}
\usepackage{graphicx}
\usepackage{bbding}
\usepackage{wrapfig}
\usepackage{pifont}
\usepackage{subfigure}
\usepackage{balance}
\usepackage[T1]{fontenc}
\usepackage{textcomp}
\usepackage{color}
\usepackage{xcolor}
\usepackage{multirow}
\usepackage{multicol,lipsum}
\usepackage{epsfig}
\usepackage{algorithm}
\usepackage{algorithmic}
\usepackage{multirow}
\usepackage{hyperref}
\usepackage{mathtools}
\usepackage{bm}
\usepackage{listings}
\newtheorem{remark}{Remark}

\newtheorem{theorem}{Theorem}

\newtheorem{lemma}{Lemma}
\newtheorem{example}{Example}
\newtheorem{corollary}{Corollary}
\newtheorem{assumption}{Assumption}
\usepackage{array}
\usepackage{float}
\usepackage[short]{optidef}
\usepackage{mdframed}
\mdfdefinestyle{mystyle}{
    backgroundcolor=white!10}

\newcommand{\oomit}[1]{}

\newcolumntype{M}[1]{>{\centering\arraybackslash}m{#1}}
\newcolumntype{N}{@{}m{0pt}@{}}

\begin{document}

\title{\LARGE \bf
Converse Barrier Certificates for Set-Based Stochastic Reach-Avoid Verification
}

\author{Bai Xue$^1$ and C.-H. Luke Ong$^2$\\
\small 1. KLSS, Institute of Software, Chinese Academy of Sciences, Beijing, China\\
2. College of Computing and Data Science, Nanyang Technological University, Singapore\\
Email: xuebai@ios.ac.cn; luke.ong@ntu.edu.sg
}

\maketitle
\thispagestyle{empty}
\pagestyle{empty}

\begin{abstract}
Recent work established sufficient and necessary barrier-like conditions for infinite-horizon reach-avoid verification of stochastic discrete-time systems from a single initial state. Whether such a converse characterization extends to a set of initial states, however, remains open. In this paper, we answer this question affirmatively for compact initial sets. We consider a uniform reach-avoid specification requiring the reach-avoid probability to exceed a prescribed threshold for every initial state in a compact set. Under appropriate assumptions, including continuous system transitions, together with a strict uniform probability margin, we extend the pointwise converse characterization to the uniform setting.
\end{abstract}

\section{Introduction}
Stochastic discrete-time systems arise widely in safety-critical applications, including autonomous vehicles, robotics, and intelligent control systems, where uncertainty and disturbances can significantly affect system behavior. For such systems, it is essential to provide formal guarantees that desired objectives are achieved with sufficiently high probability \cite{baier2008principles,franzle2011measurability}. Among these objectives, \emph{reach-avoid specifications} are particularly important: starting from an initial state, the system should reach a target region while remaining within a prescribed safe region until the target is first reached, either at some finite time without a prescribed time bound (infinite-horizon reach-avoid verification) or within a prescribed bounded time horizon (finite-horizon reach-avoid verification). Barrier certificates have emerged as an effective tool for establishing such probabilistic guarantees without explicitly computing reachable sets \cite{prajna2007framework}.

Various barrier-like conditions have been developed to facilitate the construction of barrier certificates that provide sufficient guarantees for probabilistic reach-avoid verification. However, from a practical verification perspective, the necessity direction is equally important, because a sufficient condition alone shows that the existence of a certificate implies the desired system property. When the condition is not satisfied, it is unclear whether the property itself fails or the condition is simply too restrictive. A necessary-and-sufficient characterization removes this ambiguity by establishing an equivalence between the system property and the existence of a corresponding barrier certificate under the stated assumptions. More recently, \cite{xue2026sufficient} established \emph{necessary and sufficient barrier-like conditions} for infinite-horizon reach-avoid verification from a single initial state, which we refer to as pointwise reach-avoid verification.

A natural question is whether this converse result extends from a \emph{single initial state to a set of initial states}. In many verification problems, uncertainty in the initial state, arising from sensor noise or other sources, makes the objective inherently uniform: given a compact set of possible initial states $\mathcal{X}_0$, we seek to establish
\begin{equation}
    \mathbb{P}_{\mathrm{RA}}(\bm{x}) \geq \epsilon, \qquad \forall \bm{x}\in \mathcal{X}_0.
\end{equation}
where $\mathbb{P}_{\mathrm{RA}}(\bm{x})$ denotes the probability that, starting from $\bm{x}$, the system eventually reaches the target while remaining within the safe set until the target is first reached. A key ingredient of the pointwise characterization in \cite{xue2026sufficient} is a discounted reach-avoid value function, which converges to the reach-avoid probability as the discount factor approaches one. This convergence enables the value function, for a discount factor sufficiently close to one, to serve as a barrier certificates and leads to a converse characterization for the pointwise verification problem. At first sight, one might expect the pointwise result in \cite{xue2026sufficient} to extend directly by applying it to each $\bm{x}\in \mathcal{X}_0$. However, this does not establish a uniform characterization. Indeed, for each individual $\bm{x}$, the pointwise result may provide a discount factor for which the discounted value exceeds $\epsilon$, but these state-dependent discount factors may approach one along a sequence of initial states, as noted in \cite{xue2026sufficient}. Consequently, the pointwise characterization does not by itself guarantee the existence of a common discount factor strictly smaller than one that is valid uniformly over the entire initial set.

In this paper, we resolve this issue for compact initial sets. Under appropriate assumptions, including continuity of the system dynamics in the state, measurability in the disturbance, openness of the safe and target sets, compactness of the initial set, and a strict uniform probability margin, we show that the state-dependent discount factors can be bounded uniformly away from one. The proof relies on two regularity results. First, although the reach-avoid hitting time may depend discontinuously on the initial state, we established lower semicontinuity of the discounted reach-avoid value function proposed in \cite{xue2026sufficient}. Second, we show that the minimum discount factor required to exceed the prescribed threshold is upper semicontinuous. Compactness of the initial set then yields a common discount factor strictly smaller than one. The existence of this common discount factor allows us to extend the pointwise converse characterization to the uniform setting.  Specifically, the uniform reach-avoid specification holds if and only if there exist a bounded function $v$ and a common $\gamma\in(0,1)$ satisfying the corresponding barrier-like conditions uniformly over the compact initial set. The necessity direction is obtained by constructing the certificate from the discounted reach-avoid value function associated with the common discount factor. Furthermore, based on this necessity result,  we also establish the necessity, under the same assumptions, of an alternative barrier-like formulation in \cite{xue2021reach} involving an auxiliary function $w$. Notably, this formulation eliminates the explicit discount factor from the certificate conditions, making it more convenient for computational implementation. 

The main contributions are summarized below:
\begin{enumerate}
    \item We extend the necessary-and-sufficient barrier-like characterization in \cite{xue2026sufficient} for infinite-horizon reach-avoid verification of stochastic discrete-time systems from a single initial state to a compact set of initial states.
    \item We establish the necessity of the alternative barrier-like formulation in \cite{xue2021reach}, involving an auxiliary function, for uniform infinite-horizon reach-avoid verification over a compact initial set. 
\end{enumerate}

\subsection*{Related Work}
The formal verification of stochastic systems has included qualitative almost-sure guarantees \cite{majumdar2024necessary,kordabad2026certificates} and quantitative probabilistic bounds \cite{kushner1967stochastic, prajna2007framework,chakarov2013probabilistic,wang2021safety,yu2023safe,abate2024stochastic,henzinger2025supermartingale,xue2026quantitative,chen2026construction}. For infinite-horizon verification, barrier certificates are widely used to bound safety and reach-avoid probabilities \cite{prajna2007framework,cao2025comparative}. Existing methodologies include Ville's inequality-based approaches \cite{prajna2007framework,vzikelic2023compositional} and equation-relaxation-based approaches that establish lower and upper probabilistic bounds \cite{xue2021reach,yu2023safe,xue2026sufficient}. For finite-horizon analysis, the c-martingale-based barrier framework \cite{kushner1967stochastic,steinhardt2012finite} has been developed and extended to temporal logic specifications \cite{jagtap2018temporal} and neural barrier certificates \cite{mathiesen2022safety}. More recently, \cite{zhi2024unifying,xue2024finite} have developed more general conditions for establishing lower and upper probabilistic bounds.

\noindent{\textbf{From Sufficiency to Converse Set-Based Verification.}} Most existing barrier frameworks rely on sufficient conditions and are therefore inherently conservative. Recent advances have established necessary-and-sufficient conditions for infinite-horizon safety and reach-avoid verification  of stochastic discrete-time systems via Bellman relaxations \cite{xue2026sufficient}. However, the necessity result for infinite-horizon reach-avoid verification is currently limited to the pointwise setting. Whether a barrier certificate exists for a set-based specification, namely, a uniform probabilistic guarantee over an entire continuous initial set, remains open. This work addresses this issue by establishing converse barrier certificates for set-based infinite-horizon reach-avoid verification of stochastic discrete-time systems. Under appropriate assumptions, we prove that whenever a system satisfies a reach-avoid specification uniformly over a compact initial set, a corresponding barrier certificate is guaranteed to exist. This establishes a necessary-and-sufficient characterization and removes the conservatism inherent in sufficiency-only conditions at the theoretical level.

Throughout this paper, we use the following basic notions: $\mathbb{R}$ denotes the set of real values; $\mathbb{N}$ denotes the set of nonnegative integers; for sets $\Delta_1$ and $\Delta_2$, $\overline{\Delta_1}$ denotes the closure of the set $\Delta_1$, and $\Delta_1\setminus \Delta_2$ denotes the difference of sets $\Delta_1$ and $\Delta_2$, which is the set of all elements in $\Delta_1$ that are not in $\Delta_2$;  $1_A(\bm{x})$ denotes the indicator function in the set $A$,
where, if $\bm{x}\in A$, then $1_A(\bm{x}) = 1$ and if $\bm{x}\notin A$, $1_A(\bm{x}) = 0$.

\section{Preliminaries}
\label{sec:pre}
This section introduces the stochastic discrete-time systems considered in this paper and formulated the uniform reach-avoid verification problem. We first define the system dynamics and the associated reach-avoid probability. We then recall the pointwise reach-avoid verification problem and its barrier-like characterization. Finally, we state the assumptions adopted throughout the paper and formulate the uniform reach-avoid verification problem.

\subsection{Stochastic Systems}
\label{sub:ss}
Consider the stochastic discrete-time system:
\begin{equation}
    \label{systems}
    \bm{x}_{k+1}=\bm{f}(\bm{x}_k,\bm{\theta}_k), \qquad k\in \mathbb{N},
\end{equation}
where $\bm{x}_k\in \mathbb{R}^n$ denotes the state and $\bm{\theta}_k \in \Theta \subseteq \mathbb{R}^m$ is an i.i.d. disturbance with probability distribution $\mathbb{P}_{\bm{\theta}}$. We denote by $\mathbb{E}_{\bm{\theta}}[\cdot]$ the expectation with respect to $\mathbb{P}_{\bm{\theta}}$. Here, $\Theta$ is equipped with its standard Borel $\sigma$-algebra. Let 
\[\pi=(\bm{\theta}_0,\bm{\theta}_1,\ldots)\in \Theta^{\infty}\]
denote a disturbance realization. For an initial state $\bm{x}\in \mathbb{R}^n$ and a realization $\pi\in \Theta^{\infty}$, the corresponding trajectory is denoted by $\bm{\phi}_{\bm{x}}^{\pi}: \mathbb{N}\rightarrow \mathbb{R}^n$, with $\bm{\phi}_{\bm{x}}^{\pi}(0)=\bm{x}$.

Let $\mathcal{X}\subseteq \mathbb{R}^n$ be a safe set and $\mathcal{X}_r\subseteq \mathcal{X}$ be a target set. For an initial state $\bm{x}\in \mathbb{R}^n$, define the reach-avoid hitting time 
\begin{equation}
    \label{hitting}
    \tau(\bm{x},\pi):=\inf\left\{k\in \mathbb{N} \middle|
    \begin{split}
    &\bm{\phi}_{\bm{x}}^{\pi}(k) \in \mathcal{X}_r \bigwedge \\
    &\bm{\phi}_{\bm{x}}^{\pi}(j) \in \mathcal{X}, j\leq k
    \end{split}
    \right\},
\end{equation}
where $\tau(\bm{x},\pi):=\infty$ if the target is never reached while the trajectory remains in the safe set, or if the trajectory leaves the safe set before reaching the target.

The reach-avoid probability from an initial state $\bm{x}\in \mathbb{R}^n$ is defined as 
\begin{equation}
    \label{ra_p}
\mathbb{P}_{\mathrm{RA}}(\bm{x}):=\mathbb{P}_{\pi}\bigl(\tau(\bm{x},\pi)<\infty\bigr),
\end{equation}
where $\mathbb{P}_{\pi}$ denotes the canonical product probability measure on $\bigl(\Theta^{\infty},\mathcal{B}(\Theta)^{\otimes\infty}\bigr)$ induced by the i.i.d. disturbance distribution $\mathbb{P}_{\bm{\theta}}$ via the Kolmogorov extension theorem. We denote by $\mathbb{E}_{\pi}[\cdot]$ the expectation with respect to $\mathbb{P}_{\pi}$. By definition, $\mathbb{P}_{\mathrm{RA}}(\bm{x})=0$ for all $\bm{x}\in\mathbb{R}^n\setminus\mathcal{X}$ and $\mathbb{P}_{\mathrm{RA}}(\bm{x})=1$ for all $\bm{x}\in \mathcal{X}_r$.

\subsection{Pointwise Reach-Avoid Verification}
We recall the pointwise reach-avoid verification and its sufficient and necessary barrier-like condition in \cite{xue2026sufficient}.

Given $\bm{x}_0\in \mathcal{X}\setminus\mathcal{X}_r$ and a prescribed probability threshold $\epsilon\in(0,1)$, pointwise reach-avoid verification aims to certify that, starting from $\bm{x}_0$, the probability of reaching $\mathcal{X}_r$ while remaining in $\mathcal{X}$ until $\mathcal{X}_r$ is first reached is at least $\epsilon$, i.e., 
\[\mathbb{P}_{\mathrm{RA}}(\bm{x}_0)\geq \epsilon.\]

A discounted reach-avoid value function was used in \cite{xue2026sufficient} for the pointwise  reach-avoid verification. For $\bm{x}\in \mathbb{R}^n$ and $\gamma \in (0,1)$, define 
\begin{equation}
\label{dis_val_fun}
\widetilde{V}_{\gamma}(\bm{x}):=\mathbb{E}_{\pi}[\gamma^{\tau(\bm{x},\pi)}],
\end{equation}
with the convention $\gamma^{\infty}=0$. Moreover, 
\[\lim_{\gamma \rightarrow 1^-}\widetilde{V}_{\gamma}(\bm{x})=\mathbb{P}_{\mathrm{RA}}(\bm{x}).\]
The value function $\widetilde{V}_{\gamma}:\mathbb{R}^n\rightarrow \mathbb{R}$ is the unique bounded solution to the Bellman equation
\begin{equation}
\widetilde{V}_{\gamma}(\bm{x})=
\begin{cases}
1, & \bm{x}\in\mathcal{X}_r,\\[1mm]
\gamma\mathbb{E}_{\theta}\!\left[
\widetilde{V}_{\gamma}\bigl(\bm{f}(\bm{x},\bm{\theta})\bigr)
\right],
& \bm{x}\in\mathcal{X}\setminus\mathcal{X}_r,\\[1mm]
0, & \bm{x}\notin\mathcal{X}.
\end{cases}
\end{equation}
Based on this equation, \cite{xue2026sufficient} established the following necessary and sufficient barrier-like characterization for the pointwise reach-avoid verification problem.

\begin{theorem}
    Let $\epsilon\in [0,1)$. Given $\bm{x}_0\in \mathcal{X}\setminus \mathcal{X}_r$, if there exist a constant $\gamma \in (0,1)$ and a barrier certificate $v:\mathbb{R}^n \rightarrow \mathbb{R}$, which is bounded over $\mathcal{X}$ and satisfies the following barrier-like condition: 
\begin{equation}
\label{constraint_gamma0}
    \begin{cases}
        v(\bm{x}_0)\geq \epsilon,\\
         v(\bm{x}) \leq \gamma \mathbb{E}_{\bm{\theta}}[v(\bm{f}(\bm{x},\bm{\theta}))], & \forall \bm{x}\in \mathcal{X}\setminus \mathcal{X}_r,\\
         v(\bm{x})\leq 1, &\forall \bm{x}\in \mathcal{X}_r, \\
         v(\bm{x})\leq 0, & \forall \bm{x}\in \mathbb{R}^n\setminus \mathcal{X},
    \end{cases}
\end{equation}
then, $\mathbb{P}_{\mathrm{RA}}(\bm{x}_0)\geq \epsilon$.  Moreover, if $\mathbb{P}_{\mathrm{RA}}(\bm{x}_0)>\epsilon$, there indeed exist such a constant 
$\gamma \in (0,1)$ and a barrier certificate $v:\mathbb{R}^n \rightarrow \mathbb{R}$, bounded over $\mathcal{X}$, that satisfy \eqref{constraint_gamma0}.
\end{theorem}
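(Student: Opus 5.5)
For sufficiency, I would compare $v$ with the discounted value function $\widetilde{V}_\gamma$. I would not use Ville-type martingale arguments. Instead I would iterate the inequality in \eqref{constraint_gamma0} along trajectories from $\bm{x}_0$.

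First, I would define the stopped process $v(\bm{\phi}_{\bm{x}_0}^{\pi}(k\wedge \sigma))$. Here $\sigma$ is the first exit time from $\mathcal{X}\setminus\mathcal{X}_r$, i.e. the first time the trajectory enters $\mathcal{X}_r$ or leaves $\mathcal{X}$. Unrolling the second condition $k$ times by induction, using the i.i.d. structure and the tower property, gives
\[
v(\bm{x}_0)\le \mathbb{E}_{\pi}\bigl[\gamma^{\sigma}v(\bm{\phi}_{\bm{x}_0}^{\pi}(\sigma))1_{\{\sigma\le k\}}\bigr]+\gamma^{k}\,\mathbb{E}_{\pi}\bigl[v(\bm{\phi}_{\bm{x}_0}^{\pi}(k))1_{\{\sigma>k\}}\bigr].
\]
On $\{\sigma\le k\}$ the stopped state lies either in $\mathcal{X}_r$, where $v\le1$ and $\sigma=\tau$, or outside $\mathcal{X}$, where $v\le0$ and $\tau=\infty$. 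So the first term is at most $\mathbb{E}_\pi[\gamma^{\tau}1_{\{\tau\le k\}}]$.

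The second term is controlled by the boundedness of $v$ over $\mathcal{X}$, say $|v|\le M$. On $\{\sigma>k\}$ the state $\bm{\phi}_{\bm{x}_0}^{\pi}(k)$ lies in $\mathcal{X}$, so this term is bounded by $\gamma^k M\to0$. This is exactly where $\gamma<1$ is essential. Letting $k\to\infty$ yields
\[
\epsilon\le v(\bm{x}_0)\le \mathbb{E}_\pi[\gamma^{\tau}]=\widetilde{V}_\gamma(\bm{x}_0)\le \mathbb{P}_{\mathrm{RA}}(\bm{x}_0).
\]
The last inequality holds because $\gamma^{\tau}\le 1_{\{\tau<\infty\}}$.

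For necessity, assume $\mathbb{P}_{\mathrm{RA}}(\bm{x}_0)>\epsilon$. By the stated limit $\lim_{\gamma\to1^-}\widetilde{V}_\gamma(\bm{x}_0)=\mathbb{P}_{\mathrm{RA}}(\bm{x}_0)$, which follows from monotone convergence since $\gamma^{\tau}\uparrow 1_{\{\tau<\infty\}}$, there is some $\gamma\in(0,1)$ with $\widetilde{V}_\gamma(\bm{x}_0)\ge\epsilon$. I would take $v:=\widetilde{V}_\gamma$. It is bounded in $[0,1]$. By the Bellman equation it satisfies the second condition with equality on $\mathcal{X}\setminus\mathcal{X}_r$, equals $1$ on $\mathcal{X}_r$, and equals $0$ off $\mathcal{X}$. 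Hence \eqref{constraint_gamma0} holds.

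The main technical obstacle is the rigorous justification of the unrolling step in the sufficiency part. This needs measurability of $v\circ\bm{f}$ so that the expectations are defined, and a clean Markov-property argument on the product space $\Theta^\infty$. I would handle it by proving the $k$-step inequality by induction. At each step I would condition on $(\bm{\theta}_0,\dots,\bm{\theta}_{k-1})$ and apply the one-step inequality to the state $\bm{\phi}_{\bm{x}_0}^{\pi}(k)\in\mathcal{X}\setminus\mathcal{X}_r$ on the event $\{\sigma>k\}$.
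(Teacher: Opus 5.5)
Your proposal is correct, and its necessity half is exactly the argument the paper gives: choose $\gamma$ close enough to one that $\widetilde{V}_{\gamma}(\bm{x}_0)>\epsilon$ (via the limit $\widetilde{V}_{\gamma}(\bm{x}_0)\to\mathbb{P}_{\mathrm{RA}}(\bm{x}_0)$), take $v=\widetilde{V}_{\gamma}$, and read the barrier conditions off the Bellman equation. The sufficiency half is not proved in this paper (it is only cited from earlier work), and your unrolling along the stopped trajectory $\bm{\phi}_{\bm{x}_0}^{\pi}(k\wedge\sigma)$ is a valid way to establish it: this is essentially the frozen-state switched-system device the paper's closing remark attributes to those works, with $\gamma^{k}M\to0$ killing the remainder and $\gamma^{\tau}\le 1_{\{\tau<\infty\}}$ giving $\widetilde{V}_{\gamma}(\bm{x}_0)\le\mathbb{P}_{\mathrm{RA}}(\bm{x}_0)$.
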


According to \cite{xue2026sufficient}, for any $\gamma\in(0,1)$, if
$\widetilde{V}_{\gamma}:\mathbb{R}^n\to\mathbb{R}$ satisfies $\widetilde{V}_{\gamma}(\bm{x}_0)\geq\epsilon$, then the pair $(\widetilde{V}_{\gamma},\gamma)$ satisfies the barrier-like condition \eqref{constraint_gamma0}. Since
\[
\mathbb{P}_{\mathrm{RA}}(\bm{x}_0)>\epsilon
\quad\text{and}\quad
\lim_{\gamma\rightarrow1^-}
\widetilde{V}_{\gamma}(\bm{x}_0)
=
\mathbb{P}_{\mathrm{RA}}(\bm{x}_0),
\]
there exists $\gamma\in(0,1)$ sufficiently close to one such that
$\widetilde{V}_{\gamma}(\bm{x}_0)>\epsilon$. Hence, the pair $(\widetilde{V}_{\gamma},\gamma)$ satisfying
\eqref{constraint_gamma0} exists.

Although this barrier-like condition \eqref{constraint_gamma0}, with $v(\bm{x})\geq \epsilon, \forall \bm{x}\in \mathcal{X}_0$ replacing $v(\bm{x}_0)\geq \epsilon$, can also be applied to certify the reach-avoid probability over a set of initial states $\mathcal{X}_0\subseteq \mathcal{X}\setminus \mathcal{X}_r$, its necessity is still not guaranteed. The main obstacle is that the discount factor $\gamma$ in \eqref{constraint_gamma0} may depend on the initial state, making it difficult to obtain a single $\gamma<1$ that is valid uniformly over the entire initial set $\mathcal{X}_0$.

\subsection{Uniform Reach-Avoid Verification}
\label{sec:uniform}

We now consider the reach-avoid verification problem over a set of initial
states $\mathcal{X}_0\subseteq\mathcal{X}\setminus\mathcal{X}_r$. 
Given a prescribed probability threshold
$\epsilon\in(0,1)$, the uniform reach-avoid verification problem is to
establish
\begin{equation}
\label{eq:uniform}
\mathbb{P}_{\mathrm{RA}}(\bm{x})\geq\epsilon,
\qquad
\forall \bm{x}\in\mathcal{X}_0.
\end{equation}
Equivalently,
\begin{equation}
\label{eq:uniform-inf}
\inf_{\bm{x}\in\mathcal{X}_0}
\mathbb{P}_{\mathrm{RA}}(\bm{x})
\geq\epsilon.
\end{equation}

The objective of this paper is to establish the existence of a barrier certificate $v:\mathbb{R}^n \rightarrow \mathbb{R}$ and $\gamma\in (0,1)$, satisfying \eqref{constraint_gamma0} with $v(\bm{x})\geq \epsilon, \forall \bm{x}\in \mathcal{X}_0$ replacing $v(\bm{x}_0)\geq \epsilon$, for the uniform reach-avoid specification
\eqref{eq:uniform} under the assumptions stated below.

\begin{assumption}
\label{ass}
\begin{enumerate}
\item For every $\bm{\theta} \in \Theta$, $\bm{f}(\cdot,\bm{\theta}):\mathbb{R}^n \rightarrow\mathbb{R}^n$
is continuous, and for every $\bm{x}\in \mathbb{R}^n$, $\bm{f}(\bm{x},\cdot): \Theta \rightarrow \mathbb{R}^n$ is measurable.

\item Both the safe set $\mathcal{X}\subset\mathbb{R}^n$ and the target set $\mathcal{X}_r\subset\mathcal{X}$ are open.

\item The initial set
\[
\mathcal{X}_0\subseteq\mathcal{X}\setminus\mathcal{X}_r
\]
is compact.

\item There is a strict reach-avoid margin:
\begin{equation}
\label{eq:strict-margin}
\mathbb{P}_{\mathrm{RA}}(\bm{x})>\epsilon,
\qquad
\forall\bm{x}\in\mathcal{X}_0.
\end{equation}
\end{enumerate}
\end{assumption}

The openness of the safe set reflects the convention that reaching its boundary constitutes leaving the safe region. The compactness of $\mathcal{X}_0$, together with the continuity of $\bm{f}$ with respect to $\bm{x}$ for every fixed $\bm{\theta} \in \Theta$, is essential for establishing a uniform characterization over the initial set.

In addition, under Assumption \ref{ass}, for every fixed $\bm{x}\in\mathbb{R}^n$, the finite-horizon trajectory map $\pi\mapsto\bm{\phi}_{\bm{x}}^{\pi}(k)$ is measurable for every $k\in\mathbb{N}$. Hence, for each $k\in \mathbb{N}$,
\[
\begin{split}
&\{\pi\mid \tau(\bm{x},\pi)=k\}\\
=&\bigcap_{j=0}^{k-1}
\{\pi \mid \bm{\phi}_{\bm{x}}^{\pi}(j)\in X\setminus \mathcal{X}_r\}
\cap
\{\pi \mid \bm{\phi}_{\bm{x}}^{\pi}(k)\in X_r\}
\end{split}
\]
is measurable, since $\mathcal{X}$ and $\mathcal{X}_r$ are open and hence Borel sets. Thus $\tau(\bm{x},\cdot)$ is a measurable random variable, and consequently $\mathbb{P}_{\mathrm{RA}}(\bm{x})$ and $\widetilde{V}_\gamma(\bm{x})$ are well-defined.

\section{Necessity of Uniform Reach-Avoid Verification}
\label{sec:urav}
In this section, we establish the necessity of the barrier-like conditions for uniform reach-avoid verification. In order to address this issue, we first establish the lower semicontinuity of the discounted value function \eqref{dis_val_fun} and then show that, under the strict reach-avoid margin, a single discount factor $\gamma \in (0,1)$ can be chosen uniformly over the compact initial set $\mathcal{X}_0$. This uniform discount factor allows the discounted reach-avoid value function itself to be constructed as a uniform barrier certificate.

\subsection{Lower Semicontinuity of the Discounted Value Function}
\label{sec:ldv}

We first establish the lower semicontinuity of the discounted reach-avoid value function \eqref{dis_val_fun} with respect to the initial state. The key observation is that,for a fixed disturbance realization, the discounted hitting-time payoff $\gamma^{\tau(\bm{x},\pi)}$ is lower semicontinuous with respect to the initial state. Fatou's lemma then allows this pathwise property to be transferred to the expected discounted value.

\begin{lemma}
\label{lemma1}
Under Assumption~\ref{ass}, for every $\gamma\in(0,1)$ and every $\bm{x}_0\in\mathcal{X}_0$, the function
\[\bm{x}\mapsto\gamma^{\tau(\bm{x},\pi)}\] is lower semicontinuous at $\bm{x}_0$ for  every disturbance realization $\pi$.
\end{lemma}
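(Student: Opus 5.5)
Fix $\gamma\in(0,1)$, a realization $\pi$, and $\bm{x}_0\in\mathcal{X}_0$. We must show that for every sequence $\bm{x}_j\to\bm{x}_0$,
\[
\liminf_{j\to\infty}\gamma^{\tau(\bm{x}_j,\pi)}\geq\gamma^{\tau(\bm{x}_0,\pi)},
\]
or, equivalently, that for every $c<\gamma^{\tau(\bm{x}_0,\pi)}$ there is a neighborhood of $\bm{x}_0$ on which $\gamma^{\tau(\cdot,\pi)}>c$. Since $\gamma\in(0,1)$, larger hitting times give smaller payoffs. Lower semicontinuity of $\bm{x}\mapsto\gamma^{\tau(\bm{x},\pi)}$ is therefore the same as upper semicontinuity of $\bm{x}\mapsto\tau(\bm{x},\pi)$ in $\mathbb{N}\cup\{\infty\}$. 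In other words, the hitting time cannot jump up in the limit.

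\textbf{Case $\tau(\bm{x}_0,\pi)=\infty$.} Here $\gamma^{\tau(\bm{x}_0,\pi)}=0$. Since $\gamma^{\tau(\bm{x},\pi)}\geq 0$ for every $\bm{x}$, the inequality is immediate and nothing needs proof.

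\textbf{Case $\tau(\bm{x}_0,\pi)=K<\infty$.} By definition,
\[
\bm{\phi}_{\bm{x}_0}^{\pi}(j)\in\mathcal{X}\ \text{for } j\leq K,\qquad \bm{\phi}_{\bm{x}_0}^{\pi}(K)\in\mathcal{X}_r .
\]
Along the fixed realization, each finite-time trajectory map is a composition,
\[
\bm{x}\mapsto\bm{\phi}_{\bm{x}}^{\pi}(j)=\bm{f}(\cdot,\bm{\theta}_{j-1})\circ\cdots\circ\bm{f}(\cdot,\bm{\theta}_0)(\bm{x}).
\]
By Assumption~\ref{ass}(1), each factor is continuous in $\bm{x}$, so each map $\bm{x}\mapsto\bm{\phi}_{\bm{x}}^{\pi}(j)$ is continuous for $j=0,\dots,K$. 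By Assumption~\ref{ass}(2), $\mathcal{X}$ and $\mathcal{X}_r$ are open. The preimage sets
\[
U_j:=\{\bm{x}\mid\bm{\phi}_{\bm{x}}^{\pi}(j)\in\mathcal{X}\},\quad j\leq K,\qquad W:=\{\bm{x}\mid\bm{\phi}_{\bm{x}}^{\pi}(K)\in\mathcal{X}_r\}
\]
are therefore open, and each contains $\bm{x}_0$. The finite intersection $N:=W\cap\bigcap_{j\leq K}U_j$ is then an open neighborhood of $\bm{x}_0$.

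For every $\bm{x}\in N$, the index $k=K$ belongs to the set in the infimum defining $\tau(\bm{x},\pi)$ in \eqref{hitting}. Hence $\tau(\bm{x},\pi)\leq K$, and so
\[
\gamma^{\tau(\bm{x},\pi)}\geq\gamma^{K}=\gamma^{\tau(\bm{x}_0,\pi)}\qquad\text{on } N .
\]
This gives lower semicontinuity at $\bm{x}_0$, and in fact the stronger statement that the liminf dominates the value at $\bm{x}_0$ on a whole neighborhood.

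The only delicate point is that $\tau$ is defined through an infimum over a condition involving the whole past trajectory. The argument must not claim continuity of $\tau$, which can fail: nearby states may hit the target earlier, or may fail to hit it at all when $\tau(\bm{x}_0,\pi)=\infty$. It only needs the one-sided bound $\tau\leq K$. That bound holds because membership of the single index $K$ in the defining set is an open condition, and this is exactly where openness of both $\mathcal{X}$ and $\mathcal{X}_r$ enters.

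Compactness of $\mathcal{X}_0$ and the strict margin play no role here; the argument works at any $\bm{x}_0\in\mathbb{R}^n$. Measurability in $\bm{\theta}$ is also not needed, since $\pi$ is fixed.
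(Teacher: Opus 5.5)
Your proof is correct and follows the paper's argument. The case $\tau(\bm{x}_0,\pi)=\infty$ is trivial, and when $\tau(\bm{x}_0,\pi)=K<\infty$, continuity of the finite-horizon trajectory maps together with openness of $\mathcal{X}$ and $\mathcal{X}_r$ gives a neighborhood on which $\tau(\cdot,\pi)\leq K$, hence $\gamma^{\tau(\cdot,\pi)}\geq\gamma^{K}$; your explicit preimage and finite-intersection construction, and your observation that compactness and the strict margin are not used, only spell out what the paper states more briefly.
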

\begin{proof}
    Fix $x_0\in\mathcal{X}_0$ and a disturbance realization $\pi$.

If $\tau(\bm{x}_0,\pi)=\infty$, then, by the convention $\gamma^{\infty}=0$,
\[\gamma^{\tau(\bm{x}_0,\pi)}=0.\] 
Since $\gamma^{\tau(\bm{x},\pi)}\geq 0$ for every $\bm{x}$, we have 
\[\liminf_{\bm{x}\rightarrow \bm{x}_0} \gamma^{\tau(\bm{x},\pi)}\geq 0=\gamma^{\tau(\bm{x}_0,\pi)},\]
and hence lower semicontinuity holds at $\bm{x}_0$.

Now suppose that $\tau(\bm{x}_0,\pi)=k<\infty$. By definition, 
\[\bm{\phi}_{\pi}^{\bm{x}_0}(j)\in\mathcal{X}, \qquad j=0,\ldots,k-1,\]
and 
\[\bm{\phi}_{\pi}^{\bm{x}_0}(k)\in\mathcal{X}_r.\] 

Since $\mathcal{X}$ and $\mathcal{X}_r$ are open and $\bm{f}$ is continuous, the finite-horizon trajectory $\bm{\phi}_{\pi}^{\bm{x}}(j)$ depends continuously on $\bm{x}$ for every $j\leq k$. Hence there exists a neighborhood $\mathcal{U}$ of $\bm{x}_0$ such that, for every $\bm{x}\in \mathcal{U}$,
\[\bm{\phi}_{\pi}^{\bm{x}}(j)\in\mathcal{X}, \qquad j=0,\ldots,k-1,\]
and 
\[\bm{\phi}_{\pi}^{\bm{x}}(k)\in\mathcal{X}_r.\]
Consequently,
\[\tau(\bm{x},\pi)\leq k\]
for every $\bm{x}\in \mathcal{U}$. Since $\gamma\in(0,1)$,
\[\gamma^{\tau(\bm{x},\pi)} \geq \gamma^k = \gamma^{\tau(\bm{x}_0,\pi)}.\]
Thus, 
\[\liminf_{\bm{x}\rightarrow \bm{x}_0}\gamma^{\tau(\bm{x},\pi)} \geq \gamma^{\tau(\bm{x}_0,\pi)}.\]

Thus $\bm{x}\mapsto\gamma^{\tau(\bm{x},\pi)}$ is lower semicontinuous at $\bm{x}_0$.
\end{proof}

\begin{lemma}
    \label{lemma2}
Under Assumption \ref{ass}, the discounted reach-avoid value function $\widetilde{V}_{\gamma}$ is lower semicontinuous on $\mathcal{X}_0$ for every $\gamma\in(0,1)$.
\end{lemma}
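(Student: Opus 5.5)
The plan is to transfer the pathwise lower semicontinuity of Lemma~\ref{lemma1} to the expectation via Fatou's lemma.

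Fix $\gamma\in(0,1)$ and $\bm{x}_0\in\mathcal{X}_0$. We must show that
\[
\liminf_{\bm{x}\rightarrow\bm{x}_0}\widetilde{V}_{\gamma}(\bm{x})\geq\widetilde{V}_{\gamma}(\bm{x}_0).
\]
Since lower semicontinuity in a metric space is characterized sequentially, it suffices to take an arbitrary sequence $\bm{x}_i\rightarrow\bm{x}_0$. Passing to a subsequence if necessary, we may assume that $\widetilde{V}_{\gamma}(\bm{x}_i)$ converges to $\liminf_{i}\widetilde{V}_{\gamma}(\bm{x}_i)$.

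The key steps are as follows.
\begin{enumerate}
\item For each $i$, the map $\pi\mapsto\gamma^{\tau(\bm{x}_i,\pi)}$ is measurable, by the measurability of $\tau(\bm{x},\cdot)$ established after Assumption~\ref{ass}. It is also nonnegative and bounded by $1$. Hence Fatou's lemma applies to the sequence $g_i(\pi):=\gamma^{\tau(\bm{x}_i,\pi)}$ and gives
\[
\liminf_{i\rightarrow\infty}\mathbb{E}_{\pi}[g_i]\geq\mathbb{E}_{\pi}\Bigl[\liminf_{i\rightarrow\infty}g_i\Bigr].
\]
\item By Lemma~\ref{lemma1}, for every realization $\pi$ we have $\liminf_{i}\gamma^{\tau(\bm{x}_i,\pi)}\geq\gamma^{\tau(\bm{x}_0,\pi)}$. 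Note that the pointwise $\liminf$ along the sequence dominates the $\liminf$ over a neighborhood.
\item The function $\liminf_i g_i$ is measurable as a countable $\liminf$ of measurable functions. Monotonicity of the integral then yields
\[
\liminf_{i}\widetilde{V}_{\gamma}(\bm{x}_i)\geq\mathbb{E}_{\pi}\bigl[\gamma^{\tau(\bm{x}_0,\pi)}\bigr]=\widetilde{V}_{\gamma}(\bm{x}_0).
\]
\end{enumerate}
Since the sequence was arbitrary, this proves the claim.

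The main subtlety is measurability. Fatou's lemma needs each $g_i$ to be measurable, and we need a sequence rather than the uncountable neighborhood $\liminf$ used in Lemma~\ref{lemma1}. Working with sequences resolves both issues. Measurability of each $g_i$ follows because $\gamma^{\tau(\bm{x}_i,\cdot)}=\sum_{k}\gamma^k 1_{\{\tau(\bm{x}_i,\cdot)=k\}}$, and each event in this sum was shown to be measurable.

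A further point concerns points $\bm{x}_i$ lying outside $\mathcal{X}_0$. Lemma~\ref{lemma1} only requires the limit point $\bm{x}_0$ to lie in $\mathcal{X}_0$, and its proof only uses continuity of the finite-horizon trajectory map at $\bm{x}_0$. So the sequence may approach $\bm{x}_0$ from anywhere in $\mathbb{R}^n$. This gives lower semicontinuity at each point of $\mathcal{X}_0$, which is stronger than lower semicontinuity of the restriction to $\mathcal{X}_0$.
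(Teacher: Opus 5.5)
Your proof is correct and follows the paper's argument: pathwise lower semicontinuity from Lemma~\ref{lemma1}, combined with Fatou's lemma applied to the bounded nonnegative payoffs $\gamma^{\tau(\bm{x},\pi)}$. Your reduction to sequences is a useful refinement: the paper applies Fatou directly to the uncountable $\liminf_{\bm{x}\to\bm{x}_0}$ without addressing the measurability of that pointwise limit, and working along a sequence makes this step rigorous.
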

\begin{proof}
    Fix $\bm{x}_0\in\mathcal{X}_0$. By Lemma \ref{lemma1},
    \[\bm{x}\mapsto\gamma^{\tau(\bm{x},\pi)}\]
    is lower semicontinuous at $\bm{x}_0$ for every $\pi$. Since
    \[0\leq\gamma^{\tau(\bm{x},\pi)}\leq1,\]
    Fatou's lemma gives
    \[
    \begin{aligned}
    \liminf_{\bm{x}\rightarrow \bm{x}_0}\widetilde{V}_{\gamma}(\bm{x})&\geq \mathbb{E}_{\pi}\left[\liminf_{\bm{x}\rightarrow \bm{x}_0} \gamma^{\tau(\bm{x},\pi)}\right]\\
&\geq \mathbb{E}_{\pi}\left[ \gamma^{\tau(\bm{x}_0,\pi)}\right]=\widetilde{V}_{\gamma}(\bm{x}_0).
\end{aligned}
\]
Hence $\widetilde{V}_{\gamma}$ is lower semicontinuous at $\bm{x}_0$. Since $\bm{x}_0$ is arbitrary, the result follows.
\end{proof}

\subsection{Existence of a Uniform Discount Factor}
\label{sec:eudf}
For each initial state $\bm{x}\in \mathcal{X}_0$,
the discounted reach-avoid value defined in \eqref{dis_val_fun} converges to the reach-avoid probability
as $\gamma\uparrow1$, i.e., $\lim_{\gamma \rightarrow 1}\widetilde{V}(\bm{x})=\mathbb{P}_{\mathrm{RA}}(\bm{x}), \forall \bm{x}\in \mathcal{X}_0$. The strict margin therefore guarantees that each
initial state admits a discount factor for which the discounted value
exceeds $\epsilon$ \cite{xue2026sufficient}. We in this section show that these state-dependent discount factors
can be bounded uniformly away from one by exploiting the lower
semicontinuity of the discounted value function \eqref{dis_val_fun}  and the compactness of
$\mathcal{X}_0$.

For each $\bm{x}\in\mathcal{X}_0$, define 
\begin{equation}
    \label{eq:gamma-threshold} 
    \gamma_{\epsilon}(\bm{x}) := \inf \left\{ \gamma\in(0,1)\mid \widetilde{V}_{\gamma}(\bm{x})>\epsilon \right\}. 
\end{equation}
By the strict margin in Assumption \ref{ass} and the convergence 
\[
\lim_{\gamma \uparrow 1} \widetilde{V}_{\gamma}(\bm{x}) = \mathbb{P}_{\mathrm{RA}}(\bm{x}) > \epsilon, \qquad \forall\bm{x}\in\mathcal{X}_0, 
\]
the set in \eqref{eq:gamma-threshold} is nonempty. Hence, 
\[
\gamma_{\epsilon}(\bm{x})<1, \qquad \forall\bm{x}\in\mathcal{X}_0.
\]
\begin{lemma}
     \label{lemma3} 
 Under Assumption \ref{ass}, the function \[ \gamma_{\epsilon}:\mathcal{X}_0\rightarrow[0,1) 
     \] is upper semicontinuous. 
\end{lemma}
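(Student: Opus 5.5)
We will prove upper semicontinuity directly from the definition: for every $\bm{x}_0\in\mathcal{X}_0$ and every $c>\gamma_{\epsilon}(\bm{x}_0)$, we exhibit a neighborhood $\mathcal{U}$ of $\bm{x}_0$ such that $\gamma_{\epsilon}(\bm{x})<c$ for all $\bm{x}\in\mathcal{U}\cap\mathcal{X}_0$. This is equivalent to $\limsup_{\bm{x}\rightarrow\bm{x}_0}\gamma_{\epsilon}(\bm{x})\leq\gamma_{\epsilon}(\bm{x}_0)$. The case $c\geq 1$ is trivial, since $\gamma_{\epsilon}<1$ on $\mathcal{X}_0$, so we assume $c<1$. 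The ingredients are the definition \eqref{eq:gamma-threshold} as an infimum, the monotonicity of $\gamma\mapsto\widetilde{V}_{\gamma}(\bm{x})$, and the lower semicontinuity of $\widetilde{V}_{\gamma}$ from Lemma~\ref{lemma2}.

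\emph{Step 1 (pick a witness).} Since $\gamma_{\epsilon}(\bm{x}_0)<c$ and $\gamma_{\epsilon}(\bm{x}_0)$ is an infimum, there is some $\gamma'\in(0,1)$ with $\gamma'<c$ and $\widetilde{V}_{\gamma'}(\bm{x}_0)>\epsilon$.

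\emph{Step 2 (propagate to a neighborhood).} By Lemma~\ref{lemma2}, $\widetilde{V}_{\gamma'}$ is lower semicontinuous at $\bm{x}_0$. The set $\{\bm{x}:\widetilde{V}_{\gamma'}(\bm{x})>\epsilon\}$ is therefore a relative neighborhood of $\bm{x}_0$. That is, there exists an open $\mathcal{U}\ni\bm{x}_0$ with $\widetilde{V}_{\gamma'}(\bm{x})>\epsilon$ for all $\bm{x}\in\mathcal{U}\cap\mathcal{X}_0$. Here I will note that Lemma~\ref{lemma1} was actually proved at any $\bm{x}_0$ whose finite trajectory data satisfy the openness argument, so the lower semicontinuity at points of $\mathcal{X}_0$ suffices. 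Only the limit $\bm{x}\rightarrow\bm{x}_0$ matters, and $\widetilde{V}_{\gamma'}$ is defined on all of $\mathbb{R}^n$.

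\emph{Step 3 (conclude).} For every $\bm{x}\in\mathcal{U}\cap\mathcal{X}_0$, the number $\gamma'$ lies in the set in \eqref{eq:gamma-threshold}. Hence $\gamma_{\epsilon}(\bm{x})\leq\gamma'<c$, which gives upper semicontinuity. Monotonicity of $\gamma\mapsto\gamma^{\tau}$ (nondecreasing, since $\tau\geq 0$) shows that the admissible set is an up-interval. This is not strictly needed here, but it confirms that $\gamma_{\epsilon}$ behaves as a threshold.

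The main subtlety I expect is Step 2. Strict inequality $\widetilde{V}_{\gamma'}(\bm{x}_0)>\epsilon$ is essential, and it is exactly why the definition uses a strict inequality and an infimum rather than a minimum. Lower semicontinuity only preserves strict lower bounds in a neighborhood. Care is also needed to ensure Lemma~\ref{lemma2} is applied as lower semicontinuity of the function on $\mathbb{R}^n$ at $\bm{x}_0$, not merely of its restriction to $\mathcal{X}_0$. Either reading suffices, since we only need the neighborhood intersected with $\mathcal{X}_0$.
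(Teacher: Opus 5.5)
Your argument is correct, but it takes a more elementary route than the paper.

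\textbf{The paper's route.} It fixes $c\in(0,1)$ and proves the exact identity
$\{\bm{x}\in\mathcal{X}_0 : \gamma_{\epsilon}(\bm{x})<c\}=\{\bm{x}\in\mathcal{X}_0 : \widetilde{V}_{c}(\bm{x})>\epsilon\}$.
The inclusion ``$\subseteq$'' uses monotonicity of $\gamma\mapsto\widetilde{V}_{\gamma}(\bm{x})$. The inclusion ``$\supseteq$'' uses continuity of this map on $(0,1)$, which is imported from the earlier work. Lemma~\ref{lemma2}, applied to the single function $\widetilde{V}_c$, then gives relative openness.

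\textbf{Your route.} You pick a witness $\gamma'<c$ and apply lower semicontinuity to $\widetilde{V}_{\gamma'}$ instead. In set terms you are using
$\{\gamma_{\epsilon}<c\}=\bigcup_{\gamma'\in(0,c)}\{\widetilde{V}_{\gamma'}>\epsilon\}$.
This identity follows from the definition of the infimum alone, and the right-hand side is a union of relatively open sets.

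\textbf{What each buys.} Your argument needs neither monotonicity nor continuity in $\gamma$, so your side remark on monotonicity (and the aside about Lemma~\ref{lemma1}) can be dropped. It is fully self-contained. It also shows the general fact that a threshold defined as an infimum over a family of lower semicontinuous functions is upper semicontinuous. The paper's route needs the extra continuity input from the earlier work. In return it describes each sublevel set of $\gamma_{\epsilon}$ through a single discount factor $c$. The forward (monotonicity) half of that description is exactly what Lemma~\ref{lemma4} reuses when passing from $\bar{\gamma}>\gamma_{\epsilon}(\bm{x})$ to $\widetilde{V}_{\bar{\gamma}}(\bm{x})>\epsilon$.
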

\begin{proof}
    Fix $c\in(0,1)$. We first show that 
    \begin{equation} 
\label{eq:gamma-equivalence} 
\gamma_{\epsilon}(\bm{x})<c \quad\Longleftrightarrow\quad \widetilde{V}_{c}(\bm{x})>\epsilon. 
\end{equation} 

Suppose that $\gamma_{\epsilon}(\bm{x})<c$. By the definition of $\gamma_{\epsilon}$, there exists some $\gamma<c$ such that 
\[ 
\widetilde{V}_{\gamma}(\bm{x})>\epsilon. 
\]
Since $\widetilde{V}_{\gamma}(\bm{x})$ is nondecreasing in $\gamma$, 
\[
\widetilde{V}_{c}(\bm{x}) \geq \widetilde{V}_{\gamma}(\bm{x}) > \epsilon. 
\]

Conversely, suppose that $\widetilde{V}_{c}(\bm{x})>\epsilon$. For fixed $\bm{x}$, the map 
\[\gamma \mapsto \widetilde{V}_{\gamma}(\bm{x}) = \mathbb{E}_{\pi} \left[ \gamma^{\tau(\bm{x},\pi)} \right] \]
is continuous on $(0,1)$ \cite{xue2026sufficient}. Hence, for some $\gamma<c$ sufficiently close to $c$,
\[ 
\widetilde{V}_{\gamma}(\bm{x})>\epsilon,
\]
which implies $\gamma_{\epsilon}(\bm{x})<c$. Thus, \eqref{eq:gamma-equivalence} holds.

By Lemma \ref{lemma2}, the set 
\[\left\{ \bm{x}\in\mathcal{X}_0\mid \widetilde{V}_{c}(\bm{x})>\epsilon \right\} \]
is relatively open in $\mathcal{X}_0$. By \eqref{eq:gamma-equivalence}, this set coincides with 
\[
\left\{ \bm{x}\in\mathcal{X}_0\mid \gamma_{\epsilon}(\bm{x})<c \right\}.
\]
Thus, for every $c\in(0,1)$, the strict sublevel set
$\{\bm{x}\in\mathcal{X}_0:\gamma_{\epsilon}(\bm{x})<c\}$
is relatively open in $\mathcal{X}_0$

Therefore, $\gamma_{\epsilon}$ is upper semicontinuous. 
\end{proof}

\begin{lemma}[Uniform Discount Factor] 
\label{lemma4}
 Under Assumption~\ref{ass}, there exists $\bar{\gamma}\in(0,1)$ such that 
\begin{equation} 
\label{eq:uniform-discount} 
\widetilde{V}_{\bar{\gamma}}(\bm{x})>\epsilon, \qquad \forall\bm{x}\in\mathcal{X}_0. 
\end{equation} 
\end{lemma}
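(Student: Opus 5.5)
The plan is to combine the upper semicontinuity of $\gamma_{\epsilon}$ from Lemma~\ref{lemma3} with the compactness of $\mathcal{X}_0$. An upper semicontinuous function on a compact set attains its supremum. So I will first set
\[
\gamma^{*}:=\sup_{\bm{x}\in\mathcal{X}_0}\gamma_{\epsilon}(\bm{x})
\]
and argue that $\gamma^{*}=\gamma_{\epsilon}(\bm{x}^{*})$ for some $\bm{x}^{*}\in\mathcal{X}_0$. Since $\gamma_{\epsilon}(\bm{x})<1$ for every $\bm{x}\in\mathcal{X}_0$, this gives $\gamma^{*}<1$. If $\mathcal{X}_0$ is empty the statement is trivial, so I assume it is nonempty.

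To avoid invoking the extreme value theorem for semicontinuous functions as a black box, I will give the finite subcover argument directly. For each $\bm{x}\in\mathcal{X}_0$, pick $c_{\bm{x}}\in(\gamma_{\epsilon}(\bm{x}),1)$, which is possible because $\gamma_{\epsilon}(\bm{x})<1$. By the equivalence \eqref{eq:gamma-equivalence}, $\widetilde{V}_{c_{\bm{x}}}(\bm{x})>\epsilon$. By Lemma~\ref{lemma2}, the set
\[
\mathcal{U}_{\bm{x}}:=\{\bm{y}\in\mathcal{X}_0\mid \widetilde{V}_{c_{\bm{x}}}(\bm{y})>\epsilon\}
\]
is relatively open in $\mathcal{X}_0$, and it contains $\bm{x}$. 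The sets $\{\mathcal{U}_{\bm{x}}\}_{\bm{x}\in\mathcal{X}_0}$ therefore cover $\mathcal{X}_0$, and compactness (Assumption~\ref{ass}) yields a finite subcover $\mathcal{U}_{\bm{x}_1},\dots,\mathcal{U}_{\bm{x}_N}$.

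I will then set $\bar{\gamma}:=\max_{i}c_{\bm{x}_i}\in(0,1)$. Take any $\bm{y}\in\mathcal{X}_0$; it lies in some $\mathcal{U}_{\bm{x}_i}$, so $\widetilde{V}_{c_{\bm{x}_i}}(\bm{y})>\epsilon$. Because $\gamma\mapsto\gamma^{\tau}$ is nondecreasing on $(0,1)$ pathwise (with $\gamma^\infty=0$), $\widetilde{V}_{\gamma}(\bm{y})$ is nondecreasing in $\gamma$. This gives $\widetilde{V}_{\bar{\gamma}}(\bm{y})\geq \widetilde{V}_{c_{\bm{x}_i}}(\bm{y})>\epsilon$, which is \eqref{eq:uniform-discount}.

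I expect no step to be a genuine obstacle. The delicate points are already handled by Lemmas~\ref{lemma2} and~\ref{lemma3}: lower semicontinuity makes the strict superlevel sets open, and the monotonicity in $\gamma$ lets a maximum over finitely many discount factors work simultaneously. The one care point is the strictness: each $c_{\bm{x}}$ must be chosen strictly below $1$ so that the finite maximum stays in $(0,1)$. This is guaranteed by the strict margin \eqref{eq:strict-margin} through $\gamma_{\epsilon}(\bm{x})<1$.
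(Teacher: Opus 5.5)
Your proof is correct. It rests on the same ingredients as the paper's: openness of the strict superlevel sets $\{\bm{x}\in\mathcal{X}_0\mid\widetilde{V}_c(\bm{x})>\epsilon\}$ from Lemma~\ref{lemma2}, compactness of $\mathcal{X}_0$, and monotonicity of $\widetilde{V}_\gamma$ in $\gamma$. You assemble them differently, though.

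The paper first proves Lemma~\ref{lemma3}, that $\gamma_{\epsilon}$ is upper semicontinuous. This needs both directions of \eqref{eq:gamma-equivalence}, and the reverse direction uses continuity of $\gamma\mapsto\widetilde{V}_{\gamma}(\bm{x})$. The paper then applies the extreme value theorem to get $\gamma^{*}=\max_{\mathcal{X}_0}\gamma_{\epsilon}<1$ and takes any $\bar{\gamma}\in(\gamma^{*},1)$.

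You instead inline the extreme-value argument as a finite subcover of the superlevel sets of $\widetilde{V}_{c_{\bm{x}}}$. As a result you only need the easy, monotonicity-based direction of \eqref{eq:gamma-equivalence}. You could even drop $\gamma_{\epsilon}$ entirely by choosing $c_{\bm{x}}<1$ with $\widetilde{V}_{c_{\bm{x}}}(\bm{x})>\epsilon$ directly from $\lim_{\gamma\to1^-}\widetilde{V}_{\gamma}(\bm{x})=\mathbb{P}_{\mathrm{RA}}(\bm{x})>\epsilon$. That makes Lemma~\ref{lemma3} and the continuity-in-$\gamma$ fact unnecessary.

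The paper's route buys a sharper statement in return. $\gamma^{*}$ is the exact worst-case threshold: every $\bar{\gamma}\in(\gamma^{*},1)$ works, and no $\bar{\gamma}<\gamma^{*}$ does. Your construction only exhibits some admissible $\bar{\gamma}$ and everything above it.

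Two small remarks. Your explicit treatment of empty $\mathcal{X}_0$ covers a case the paper's $\max$ silently excludes. Your opening plan about attaining $\sup\gamma_{\epsilon}$ is superfluous, since the subcover argument never uses it.
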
 
\begin{proof}
    Since $\gamma_{\epsilon}$ is upper semicontinuous on the compact set $\mathcal{X}_0$, it attains its maximum. Let 
    \[
    \gamma^* := \max_{\bm{x}\in\mathcal{X}_0} \gamma_{\epsilon}(\bm{x}).
    \]

    Because $\gamma_{\epsilon}(\bm{x})<1$ for every $\bm{x}\in\mathcal{X}_0$, we have 
\[
\gamma^*<1. 
\]
 Choose any  $\bar{\gamma}\in(\gamma^*,1)$.
 Then,
\[
\bar{\gamma}>\gamma_{\epsilon}(\bm{x}), \qquad \forall\bm{x}\in\mathcal{X}_0.
\]
By the definition of $\gamma_{\epsilon}$ and the monotonicity of $\widetilde{V}_{\gamma}(\bm{x})$ with respect to $\gamma$, 
\[
\widetilde{V}_{\bar{\gamma}}(\bm{x})>\epsilon, \qquad \forall\bm{x}\in\mathcal{X}_0. 
\]
We complete the proof.
\end{proof}

\begin{remark} 
The strict margin in Assumption~\ref{ass} is essential for obtaining a common discount factor $\bar{\gamma}<1$ over the compact initial set $\mathcal{X}_0$. Without this margin, the discount factor required to satisfy $\widetilde{V}_{\gamma}(\bm{x})>\epsilon$ may approach one along a sequence of initial states. In that case, a single discount factor strictly smaller than one may not exist, and the pointwise necessity result does not directly extend to the uniform setting. 
\end{remark}

\subsection{Necessity of Uniform Barrier-Like Conditions} \label{sec:uniform-necessity}
We now establish the necessity of the barrier-like conditions for the uniform reach-avoid specification. The key idea is to use the common discount factor obtained in Lemma~\ref{lemma4} and construct the certificate directly from the corresponding discounted reach-avoid value function \eqref{dis_val_fun}. Its Bellman equation yields the one-step barrier-like condition, while its prescribed values on the target and unsafe regions yield the remaining conditions. Thus, the only additional ingredient required to lift the pointwise converse to the uniform setting is the existence of a common discount factor over the compact initial set.

\begin{theorem}[Necessity of Uniform Barrier-Like Conditions] 
\label{thm:uniform-necessity} 
Under Assumption~\ref{ass}, there exist a constant $\gamma \in (0,1)$ and a barrier certificate $v:\mathbb{R}^n \rightarrow \mathbb{R}$, which is bounded over $\mathcal{X}$ and satisfies the following barrier-like condition:  
\begin{equation} 
\label{eq:uniform-barrier-conditions}
 \begin{cases} 
v(\bm{x})\geq\epsilon, & \bm{x}\in\mathcal{X}_0,\\
v(\bm{x})\leq \gamma\mathbb{E}_{\bm{\theta}} \left[ v\bigl(\bm{f}(\bm{x},\bm{\theta})\bigr) \right], & \bm{x}\in\mathcal{X}\setminus\mathcal{X}_r,\\
v(\bm{x})\leq1, & \bm{x}\in\mathcal{X}_r,\\
v(\bm{x})\leq0, & \bm{x}\in \mathbb{R}^n\setminus\mathcal{X}. \
\end{cases} 
\end{equation} 
\end{theorem}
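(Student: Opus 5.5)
The plan is to take the common discount factor $\bar{\gamma}\in(0,1)$ supplied by Lemma~\ref{lemma4} and set $\gamma:=\bar{\gamma}$ and $v:=\widetilde{V}_{\bar{\gamma}}$. We then check the four conditions in \eqref{eq:uniform-barrier-conditions} one at a time. No new analytic argument is needed beyond Lemma~\ref{lemma4}; all of the uniformity over $\mathcal{X}_0$ is already contained there.

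\textbf{Boundedness.} By definition, $0\le \gamma^{\tau(\bm{x},\pi)}\le 1$ for every $\bm{x}$ and $\pi$. Hence $0\le \widetilde{V}_{\bar{\gamma}}(\bm{x})\le 1$ on all of $\mathbb{R}^n$, so $v$ is bounded over $\mathcal{X}$. This bound also makes the expectation $\mathbb{E}_{\bm{\theta}}[v(\bm{f}(\bm{x},\bm{\theta}))]$ well defined and finite.

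\textbf{Initial condition.} Lemma~\ref{lemma4} gives $v(\bm{x})=\widetilde{V}_{\bar{\gamma}}(\bm{x})>\epsilon$ for every $\bm{x}\in\mathcal{X}_0$. In particular $v(\bm{x})\ge\epsilon$ on $\mathcal{X}_0$.

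\textbf{Remaining three conditions.} These come from the Bellman equation satisfied by $\widetilde{V}_{\bar{\gamma}}$. On $\mathcal{X}\setminus\mathcal{X}_r$ it gives the equality $v(\bm{x})=\bar{\gamma}\,\mathbb{E}_{\bm{\theta}}[v(\bm{f}(\bm{x},\bm{\theta}))]$, which is stronger than the required inequality. On $\mathcal{X}_r$ it gives $v=1$. On $\mathbb{R}^n\setminus\mathcal{X}$ it gives $v=0$. One can also read these values directly from the definition: $\tau(\bm{x},\pi)=0$ for $\bm{x}\in\mathcal{X}_r$, and $\tau(\bm{x},\pi)=\infty$ for $\bm{x}\notin\mathcal{X}$.

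\textbf{Measurability check.} The only point needing care is that $\bm{\theta}\mapsto v(\bm{f}(\bm{x},\bm{\theta}))$ is measurable, so that the expectation in the second condition makes sense. I would justify the one-step identity pathwise, using the strong Markov decomposition
\[
\tau(\bm{x},\pi)=1+\tau\bigl(\bm{f}(\bm{x},\bm{\theta}_0),\sigma\pi\bigr)
\quad\text{for } \bm{x}\in\mathcal{X}\setminus\mathcal{X}_r,
\]
where $\sigma$ is the shift on realizations. Applying Fubini on the product measure $\mathbb{P}_{\bm{\theta}}\otimes\mathbb{P}_\pi$ then gives both measurability and the identity. Joint measurability of $(\bm{x},\pi)\mapsto\tau(\bm{x},\pi)$ follows from Assumption~\ref{ass}: $\bm{f}$ is continuous in $\bm{x}$ and measurable in $\bm{\theta}$, hence Carathéodory and therefore jointly measurable.

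Alternatively, one can simply cite the Bellman characterization recalled from \cite{xue2026sufficient}, as the paper does. With that, the proof is complete.
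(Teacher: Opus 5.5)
Your proposal is correct and follows the paper's proof: take $\bar{\gamma}$ from Lemma~\ref{lemma4}, set $v:=\widetilde{V}_{\bar{\gamma}}$, and read off the four conditions from Lemma~\ref{lemma4} and the Bellman equation, where the paper simply defers to Theorem~3 of \cite{xue2026sufficient}. Your shift identity $\tau(\bm{x},\pi)=1+\tau(\bm{f}(\bm{x},\bm{\theta}_0),\sigma\pi)$, combined with Carath\'eodory joint measurability and Fubini, is a valid self-contained justification of the Bellman step that the paper only cites (it is a plain one-step Markov shift rather than a strong Markov argument, which does not affect correctness).
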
 
\begin{proof}
    By Lemma \ref{lemma4}, there exists $\bar{\gamma}\in(0,1)$ such that 
\[ 
\widetilde{V}_{\bar{\gamma}}(\bm{x})>\epsilon, \qquad \forall\bm{x}\in\mathcal{X}_0. 
\]

Define 
\[ 
v(\bm{x}) := \widetilde{V}_{\bar{\gamma}}(\bm{x}), \qquad \bm{x}\in \mathbb{R}^n.
\] For every $\bm{x}\in \mathbb{R}^n$, the Bellman equation for the discounted reach-avoid value function gives 
\[ 
v(\bm{x}) = 1_{\mathcal{X}_r}(\bm{x})+1_{\mathcal{X}\setminus\mathcal{X}_r}(\bm{x})\bar{\gamma} \mathbb{E}_{\bm{\theta}} \left[ v\bigl(\bm{f}(\bm{x},\bm{\theta})\bigr) \right]. 
\] 

Then, following the proof of Theorem 3 in \cite{xue2026sufficient}, the function  $v$ satisfies \eqref{eq:uniform-barrier-conditions}, 
establishing the conclusion.
\end{proof}

    Based on the value function $\widetilde{V}_{\bar{\gamma}}$, we are able to show the necessity of another sufficient barrier-like condition in \cite{xue2021reach} for the reach-avoid verification under Assumption \ref{ass}. The condition is presented below:
\begin{equation}
\label{w}
    \begin{cases}
        v(\bm{x})\geq \epsilon,& \forall \bm{x}\in \mathcal{X}_0,\\
        v(\bm{x})\leq  \mathbb{E}_{\bm{\theta}}[v(\bm{f}(\bm{x},\bm{\theta}))],& \forall \bm{x}\in \mathcal{X}\setminus \mathcal{X}_r,\\
        v(\bm{x})\leq \mathbb{E}_{\bm{\theta}}[w(\bm{f}(\bm{x},\bm{\theta}))]-w(\bm{x}), & \forall \bm{x}\in \mathcal{X}\setminus \mathcal{X}_r,\\
        v(\bm{x})\leq 1, & \forall \bm{x}\in \mathcal{X}_r,\\
        v(\bm{x})\leq 0, & \forall \bm{x}\in \mathbb{R}^n\setminus \mathcal{X}.
    \end{cases}
\end{equation}
 If there exist a function $v:\mathbb{R}^n\rightarrow \mathbb{R}$ and a bounded function $w:\mathbb{R}^n\rightarrow \mathbb{R}$ satisfying \eqref{w}, $\mathbb{P}_{\mathrm{RA}}({\bm{x}_0})\geq \epsilon$ holds for all $\bm{x}_0\in \mathcal{X}_0$. This conclusion can be justified by following the proof of Corollary 2 in \cite{xue2021reach}. In the following, we just demonstrate its necessity. 
\begin{corollary}
\label{cor:w}
Under Assumption \ref{ass}, there exist a function $v:\mathbb{R}^n\rightarrow \mathbb{R}$ and a bounded function $w:\mathbb{R}^n\rightarrow \mathbb{R}$ satisfying barrier-like condition \eqref{w}.
\end{corollary}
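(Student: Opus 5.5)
The plan is to reuse the discounted value function at the common discount factor from Lemma~\ref{lemma4}. I would take
\[
v(\bm{x}):=\widetilde{V}_{\bar{\gamma}}(\bm{x}),\qquad w(\bm{x}):=\frac{\bar{\gamma}}{1-\bar{\gamma}}\,\widetilde{V}_{\bar{\gamma}}(\bm{x}),\qquad \bm{x}\in\mathbb{R}^n .
\]
I would then check the five conditions in \eqref{w} one by one, using the Bellman equation and the bound $0\leq\widetilde{V}_{\bar{\gamma}}\leq 1$. The bound holds because $0\leq\bar{\gamma}^{\tau(\bm{x},\pi)}\leq 1$ pathwise. In particular $w$ is bounded by $\bar{\gamma}/(1-\bar{\gamma})$, since $\bar{\gamma}<1$ is fixed.

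\textbf{Boundary conditions.} The first condition, $v\geq\epsilon$ on $\mathcal{X}_0$, is exactly Lemma~\ref{lemma4}. The fourth and fifth conditions follow from the prescribed values in the Bellman equation: $v=1$ on $\mathcal{X}_r$ and $v=0$ on $\mathbb{R}^n\setminus\mathcal{X}$.

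\textbf{Second condition.} For $\bm{x}\in\mathcal{X}\setminus\mathcal{X}_r$, the Bellman equation gives $\mathbb{E}_{\bm{\theta}}[v(\bm{f}(\bm{x},\bm{\theta}))]=v(\bm{x})/\bar{\gamma}$. Since $v(\bm{x})\geq 0$ and $\bar{\gamma}<1$, this is at least $v(\bm{x})$, so the second condition holds. This is the one place where nonnegativity of $v$ matters.

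\textbf{Third condition.} This is the only step needing a choice, namely the scaling constant in $w$. For $\bm{x}\in\mathcal{X}\setminus\mathcal{X}_r$, the Bellman equation gives
\[
\mathbb{E}_{\bm{\theta}}[w(\bm{f}(\bm{x},\bm{\theta}))]-w(\bm{x})=\frac{\bar{\gamma}}{1-\bar{\gamma}}\Bigl(\frac{v(\bm{x})}{\bar{\gamma}}-v(\bm{x})\Bigr)=v(\bm{x}).
\]
So the third inequality holds with equality. The constant $\bar{\gamma}/(1-\bar{\gamma})$ was chosen precisely to make this increment equal to $v$.

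I expect no real obstacle, since all the work (the uniform $\bar{\gamma}<1$) has already been done in Lemma~\ref{lemma4}. The only things to take care over are that boundedness of $w$ uses $\bar{\gamma}<1$ strictly, and that the identity $\mathbb{E}_{\bm{\theta}}[v(\bm{f}(\bm{x},\bm{\theta}))]=v(\bm{x})/\bar{\gamma}$ is applied only on $\mathcal{X}\setminus\mathcal{X}_r$, where the Bellman recursion holds.
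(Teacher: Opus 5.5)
Your proposal is correct and follows essentially the same approach as the paper. The paper also takes $v=\widetilde{V}_{\bar{\gamma}}$ and $w=\gamma_1\widetilde{V}_{\bar{\gamma}}$, and requires only $\frac{\gamma_1}{1+\gamma_1}\geq\bar{\gamma}$, so your choice $\gamma_1=\bar{\gamma}/(1-\bar{\gamma})$ is exactly the equality case, which is why your third condition holds with equality.
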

\begin{proof}
The proof follows the one of Corollary 1 in \cite{xue2026sufficient}.
We can obtain 
   \[
   \begin{cases}
   \tilde{V}_{\bar{\gamma}}(\bm{x})\geq \epsilon,& \forall \bm{x}\in \mathcal{X}_0,\\
    1\geq \tilde{V}_{\bar{\gamma}}(\bm{x})\geq 0,& \forall \bm{x}\in \mathbb{R}^n,\\
   \tilde{V}_{\bar{\gamma}}(\bm{x})=\bar{\gamma} \mathbb{E}_{\bm{\theta}}[\tilde{V}_{\bar{\gamma}}(\bm{f}(\bm{x},\bm{\theta}))]\\
   ~~~~~~~~~~~\leq \mathbb{E}_{\bm{\theta}}[\tilde{V}_{\bar{\gamma}}(\bm{f}(\bm{x},\bm{\theta}))], & \forall \bm{x}\in \mathcal{X}\setminus \mathcal{X}_r,\\
   \tilde{V}_{\bar{\gamma}}(\bm{x})\leq 1, & \forall \bm{x}\in \mathcal{X}_r,\\
   \tilde{V}_{\bar{\gamma}}(\bm{x})=0,  & \forall \bm{x}\in \mathbb{R}^n\setminus \mathcal{X}.
   \end{cases}
   \]
   Let $\gamma_1$ be a constant satisfying $\frac{\gamma_1}{1+\gamma_1}\geq \bar{\gamma}$, and $w(\bm{x}):=\gamma_1 \tilde{V}_{\bar{\gamma}}(\bm{x})$ for $\bm{x}\in \mathbb{R}^n$. Thus,  
   \begin{equation*}
   \begin{split}
   &\frac{\mathbb{E}_{\bm{\theta}}[w(\bm{f}(\bm{x},\bm{\theta}))]-w(\bm{x})-\tilde{V}_{\bar{\gamma}}(\bm{x})}{1+\gamma_1}\\
   =&\frac{\gamma_1 \mathbb{E}_{\bm{\theta}}[\tilde{V}_{\bar{\gamma}}(\bm{f}(\bm{x},\bm{\theta}))]-\gamma_1\tilde{V}_{\bar{\gamma}}(\bm{x})-\tilde{V}_{\bar{\gamma}}(\bm{x})}{1+\gamma_1}\\
   =&\frac{\gamma_1}{1+\gamma_1} \mathbb{E}_{\bm{\theta}}[\tilde{V}_{\bar{\gamma}}(\bm{f}(\bm{x},\bm{\theta}))]-\tilde{V}_{\bar{\gamma}}(\bm{x})\\
   \geq &\bar{\gamma} \mathbb{E}_{\bm{\theta}}[\tilde{V}_{\bar{\gamma}}(\bm{f}(\bm{x},\bm{\theta}))]-\tilde{V}_{\bar{\gamma}}(\bm{x})=0, \qquad \quad \forall \bm{x}\in \mathcal{X}\setminus \mathcal{X}_r.
   \end{split}
   \end{equation*}
   Thus, the functions $\tilde{V}_{\bar{\gamma}}(\bm{x})$ and $w(\bm{x}):=\gamma_1 \tilde{V}_{\bar{\gamma}}(\bm{x})$ satisfy \eqref{w}. Consequently,  there exist a function $v:\mathbb{R}^n\rightarrow \mathbb{R}$ and a bounded function $w:\mathbb{R}^n\rightarrow \mathbb{R}$ satisfying \eqref{w}.  
   \end{proof}

The barrier-like conditions in Theorem \ref{thm:uniform-necessity} and Corollary \ref{cor:w} differ mainly in how the discount factor is incorporated. Theorem \ref{thm:uniform-necessity}  uses a single certificate $v$ together with an explicit discount factor $\gamma\in(0,1)$ through the condition
\[
v(\bm{x})\le \gamma\mathbb{E}_{\bm{\theta}}\!\left[v(\bm{f}(\bm{x},\bm{\theta}))\right].
\]
This condition is bilinear in $(\gamma,v)$ and is therefore not jointly convex when both $\gamma$ and $v$ are treated as decision variables. However, for a fixed $\gamma$, the condition is affine in $v$. In contrast, Corollary \ref{cor:w} eliminates the explicit discount factor by introducing an auxiliary function $w$ and imposing
\[
\begin{cases}
v(\bm{x})\le \mathbb{E}_{\bm{\theta}}\!\left[v(\bm{f}(\bm{x},\bm{\theta}))\right], & \forall \bm{x}\in \mathcal{X}\setminus \mathcal{X}_r,\\
v(\bm{x})\le \mathbb{E}_{\bm{\theta}}\!\left[w(\bm{f}(\bm{x},\bm{\theta}))\right]-w(\bm{x}), & \forall \bm{x}\in \mathcal{X}\setminus \mathcal{X}_r.
\end{cases}
\]
Thus, Corollary \ref{cor:w} replaces the multiplicative discount factor with an auxiliary function and yields constraints that are affine in the decision functions $v$ and $w$, avoiding the bilinearity associated with jointly optimizing $\gamma$ and $v$. A related comparison between these two formulations is also discussed in \cite{cao2025comparative}.

\begin{remark}
In the above barrier-like conditions, $\mathbb{R}^n$ can be replaced by a subset $\widehat{\mathcal{X}}$ satisfying
\begin{equation}
\label{eq:xhat}
\overline{\mathcal{X}}
\cup
\bm{f}(\overline{\mathcal{X}},\Theta)
\subseteq
\widehat{\mathcal{X}}.
\end{equation}
with $\bm{f}(\overline{\mathcal{X}},\Theta):=
\left\{\bm{f}(\bm{x},\bm{\theta})\mid
\bm{x}\in\overline{\mathcal{X}},
\bm{\theta}\in\Theta
\right\}$.
This allows the barrier-like conditions to be enforced only on the computational domain $\widehat{\mathcal{X}}$ rather than on the entire state space $\mathbb{R}^n$. As shown in \cite{xue2021reach,xue2024finite}, this restriction is justified by considering a switched system that keeps the state fixed once it leaves the safe set. Consequently, it is sufficient for $\widehat{\mathcal{X}}$ to contain $\overline{\mathcal{X}}$ and its one-step successor set $\bm{f}(\overline{\mathcal{X}},\Theta)$, since states outside the safe set no longer evolve. 
\end{remark}

\section{Examples}
\label{sec:ex}
In this section, we demonstrate the application of our theoretical developments through one example. In this case, the function $\bm{f}(\bm{x},\bm{\theta})$ is a polynomial in the state variables $x$, and the safe set $\mathcal{X}$ and the target set $\mathcal{X}_r$ are semi-algebraic sets. We aim to search for polynomial barrier certificates that satisfy the constraints in \eqref{eq:uniform-barrier-conditions} and \eqref{w}. To do this, we encode the constraints \eqref{eq:uniform-barrier-conditions} and \eqref{w} as semi-definite programs (SDPs) using the sum-of-squares (SOS) decomposition for multivariate polynomials \cite{parrilo2003semidefinite}. The resulting SDPs are then solved  using the tool Mosek 10.1.21 \cite{aps2019mosek}. To ensure numerical stability during the solution of these SDPs, we impose a constraint on the coefficients of the unknown polynomials, specifically restricting them to the interval $[-100, 100]$. In the sequel, $\sum[\bm{x}]$ denotes the set of SOS polynomials over variables $\bm{x}$, i.e., 
$\sum[\bm{x}]=\{p\in \mathbb{R}[\bm{x}]\mid p=\sum_{i=1}^k q^2_i(\bm{x}), q_i(\bm{x})\in \mathbb{R}[\bm{x}],i=1,\ldots,k\}$, where $\mathbb{R}[\bm{x}]$ denotes the ring of polynomials in variables $\bm{x}$. For more detailed numerical demonstrations and comparisons, please refer to \cite{cao2025comparative}.

In addition, to empirically validate the reach-avoid guarantees, we employ a parallelized Monte Carlo approach. We systematically sample $10^4$ initial states in $\mathcal{X}_0$  using a polar grid and simulate $10^4$ independent trajectories from each initial state for $10^3$ steps under uniformly distributed disturbances. Each trajectory is monitored to verify that it remains within the safe set $\mathcal{X}$ until reaching the target set $\mathcal{X}_r$. We then compute the empirical reach-avoid probability for each initial state and estimate the worst-case probability by taking the infimum over the sampled initial states.  

\begin{example}
\label{ex1}
We consider the following system, which is from \cite{xue2021reach},
\begin{equation*}
\begin{cases}
x(l+1)=x(l)+0.01(-0.5x(l)-0.5y(l)+0.5x(l)y(l)),\\
y(l+1)=y(l)+0.01(-0.5y(l)+1+\theta(l)),
\end{cases}
\end{equation*}
where $\theta(l)\in\Theta=[-10,10]$ is uniform,  $\mathcal{X}_0=\{\bm{x}\mid h_0(x,y) \leq 0\}$ with $h_0(\bm{x})=x^2+(y+0.7)^2-0.01$, $\mathcal{X}=\{\bm{x}\mid h(\bm{x})< 0\}$ with $h(\bm{x})=x^2+y^2-1$,  $\mathcal{X}_r=\{\bm{x}\mid g(\bm{x})<0\}$ with $g(\bm{x})=10x^2+10(y-0.5)^2-1$, and $\widehat{\mathcal{X}}=\{\bm{x}\mid \widehat{h}(\bm{x})\leq 0\}$ with $\widehat{h}(\bm{x})=x^2+y^2-1.1$.  The Monte Carlo (\textbf{MC}) simulations yield an empirical worst-case reach-avoid probability of $0.8445$ over the sampled initial states. Two system trajectories, originating in the initial set $\mathcal{X}_0$, are visualized on Fig. \ref{fig:ex1_2}.

\begin{figure}[h]
\centering
\includegraphics[width=0.4\textwidth, height=5cm]{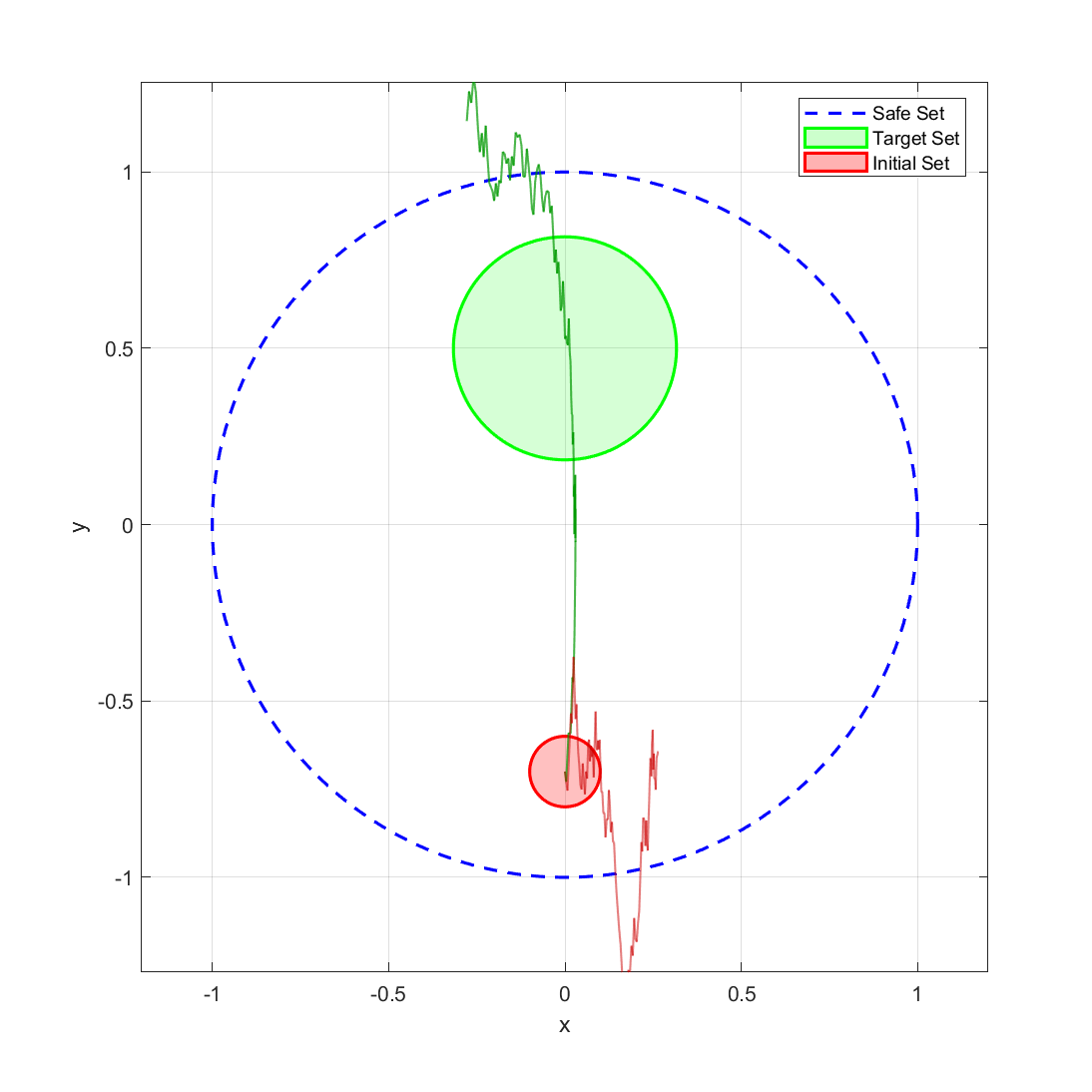}
\caption{An illustration of two system trajectories.}
\label{fig:ex1_2}
\end{figure}

We first consider the discounted barrier-like condition \eqref{eq:uniform-barrier-conditions}. For a prescribed discount factor $\gamma$ and $\epsilon$, we search for a polynomial function $v$ and SOS multipliers satisfying
     \begin{equation}
     \label{v_sos}
\begin{cases}
   v(\bm{x})-\epsilon+s_0(\bm{x})h_0(\bm{x})\in \sum[\bm{x}], \\
       \gamma \mathbb{E}_{\theta}[v(\bm{f}(\bm{x},\bm{\theta}))]-v(\bm{x})+s_1(\bm{x}) h(\bm{x})\\
       \qquad\qquad\qquad\qquad\quad-s_2(\bm{x}) g(\bm{x}) \in \sum[\bm{x}],\\
        1-v(\bm{x})+s_3(\bm{x})g(\bm{x})\in \sum[\bm{x}],\\
       -v(\bm{x})-s_4(\bm{x})h(\bm{x})+s_5(\bm{x})\widehat{h}(\bm{x})\in \sum[\bm{x}],\\
       s_0(\bm{x})\in \sum[\bm{x}],s_1(\bm{x})\in \sum[\bm{x}],\\
       s_2(\bm{x})\in \sum[\bm{x}],s_3(\bm{x})\in \sum[\bm{x}],\\
       s_4(\bm{x})\in \sum[\bm{x}], s_5(\bm{x})\in \sum[\bm{x}],
       \end{cases}
\end{equation}

The feasibility results are summarized in Table \ref{tab:sdp_ex1_set}. The table shows that feasibility strongly depends on both the polynomial degree and the choice of the discount factor. In particular, $\gamma=0.9$ does not yield a feasible certificate for any of the polynomial degrees considered, whereas choosing $\gamma$ sufficiently close to one substantially improves feasibility. This is consistent with the theoretical role of the discounted reach-avoid value function: a larger $\gamma$ provides a closer approximation to the undiscounted reach-avoid probability. Moreover, increasing the polynomial degree enlarges the search space and progressively improves the ability of the SDP \eqref{v_sos} to represent a suitable barrier certificate.

For $\gamma=0.999$, feasibility is first obtained at degree $10$ for
$\epsilon=0.70$, and higher probability thresholds become feasible as
the polynomial degree increases. In particular, at degree $20$, the
formulation is feasible for all four thresholds considered, up to
$\epsilon=0.83$. Increasing the discount factor to $\gamma=0.9999$
improves feasibility at several degrees; for example, $\epsilon=0.75$
is feasible already at degree $10$ and $\epsilon=0.80$ at degree $14$. These results illustrate that both the polynomial degree and the choice of discount factor affect the feasibility of the SDP
formulation, with larger degrees and discount factors closer to one
generally allowing less conservative probability thresholds.

\begin{table}
\caption{\centering Feasibility of SDP \eqref{v_sos} for Example~\ref{ex1}
(\ding{52}: feasible; \ding{55}: infeasible)}
\label{tab:sdp_ex1_set}
\centering
\begin{tabular}{|c|c|c|c|c|c|}
\hline
Degree&$\gamma$ & $\epsilon=0.70$ & $\epsilon=0.75$ & $\epsilon=0.80$  &$\epsilon=0.83$ 
\\
\hline
8 &0.9  & \ding{55} & \ding{55} & \ding{55} & \ding{55} \\
8 &0.999  & \ding{55} & \ding{55} & \ding{55} & \ding{55} \\
8 &0.9999  & \ding{55} & \ding{55} & \ding{55} & \ding{55} \\
10 &0.9  & \ding{55} & \ding{55} & \ding{55}& \ding{55} \\
10 &0.999  & \ding{52} & \ding{55} & \ding{55}& \ding{55} \\
10 &0.9999  & \ding{52} & \ding{52} & \ding{55}& \ding{55} \\
12 &0.9  & \ding{55} & \ding{55} & \ding{55} & \ding{55} \\
12 &0.999  & \ding{52} & \ding{55} & \ding{55} & \ding{55} \\
12 &0.9999  & \ding{52} & \ding{52} & \ding{55} & \ding{55} \\
14 &0.9  & \ding{55} & \ding{55} & \ding{55} & \ding{55}\\
14 &0.999  & \ding{52} & \ding{55} & \ding{55} & \ding{55}\\
14 &0.9999  & \ding{52} & \ding{52} & \ding{52} & \ding{55}\\
16 &0.9  & \ding{55} & \ding{55} & \ding{55} & \ding{55} \\
16 &0.999  & \ding{52} & \ding{55} & \ding{55} & \ding{55} \\
16 &0.9999  & \ding{52} & \ding{52} & \ding{52} & \ding{55} \\
18 &0.9 & \ding{55} & \ding{55} & \ding{55} & \ding{55} \\
18 &0.999 & \ding{52} & \ding{55} & \ding{55} & \ding{55} \\
18 &0.9999 & \ding{52} & \ding{52} & \ding{52} & \ding{55} \\
20 &0.9 & \ding{55} & \ding{55} & \ding{55} & \ding{55}   \\
20 &0.999 & \ding{52} & \ding{52} & \ding{52} & \ding{52}   \\
\hline
\textbf{MC} & \multicolumn{5}{c|}{\textbf{0.8445}} \\
\hline
\end{tabular}
\end{table}

We next consider the alternative barrier-like condition \eqref{w}, which eliminates the explicit discount factor and introduces an auxiliary function $w$. The corresponding SOS formulation is
 \begin{equation}
  \label{w_sos}
\begin{cases}
   v(\bm{x})-\epsilon+s_0(\bm{x})h_0(\bm{x})\in \sum[\bm{x}], \\
       \mathbb{E}_{\theta}[v(\bm{f}(\bm{x},\bm{\theta}))]-v(\bm{x})+s_1(\bm{x}) h(\bm{x})\\
       \qquad\qquad\qquad-s_2(\bm{x}) g(\bm{x}) \in \sum[\bm{x}],\\
        \mathbb{E}_{\theta}[w(\bm{f}(\bm{x},\bm{\theta}))]-w(\bm{x})-v(\bm{x})\\
        \qquad\quad+s_3(\bm{x}) h(\bm{x})-s_4(\bm{x}) g(\bm{x}) \in \sum[\bm{x}],\\
        1-v(\bm{x})+s_5(\bm{x})g(\bm{x})\in \sum[\bm{x}],\\
       -v(\bm{x})-s_6(\bm{x})h(\bm{x})+s_7(\bm{x})\widehat{h}(\bm{x})\in \sum[\bm{x}],\\
       s_0(\bm{x})\in \sum[\bm{x}],s_1(\bm{x})\in \sum[\bm{x}],\\
       s_2(\bm{x})\in \sum[\bm{x}],s_3(\bm{x})\in \sum[\bm{x}],\\
       s_4(\bm{x})\in \sum[\bm{x}],s_5(\bm{x})\in \sum[\bm{x}],\\
       s_6(\bm{x})\in \sum[\bm{x}],s_7(\bm{x})\in \sum[\bm{x}].
       \end{cases}
\end{equation}

The corresponding feasibility results are reported in Table \ref{tab:sdp_ex1_w}. In contrast to \eqref{v_sos}, this formulation does not require a prescribed discount factor. In this example, degree $10$ already yields feasible certificate for $\epsilon=0.7$ and $\epsilon=0.75$, while degree $12$ is sufficient for $\epsilon=0.80$ and degree $20$ for $\epsilon=0.83$. Thus, the $w$-based formulation avoids the sensitivity to the choice of $\gamma$ observed for \eqref{v_sos} and provides feasible certificates for several threshold at lower polynomial degrees.

\begin{table}
\caption{\centering Feasibility of SDP \eqref{w_sos} for Example~\ref{ex1}
(\ding{52}: feasible; \ding{55}: infeasible)}
\label{tab:sdp_ex1_w}
\centering
\begin{tabular}{|c|c|c|c|c|}
\hline
Degree & $\epsilon=0.70$ & $\epsilon=0.75$ & $\epsilon=0.80$  &$\epsilon=0.83$ 
\\
\hline
8  & \ding{55} & \ding{55} & \ding{55} & \ding{55} \\
10  & \ding{52} & \ding{52} & \ding{55}& \ding{55} \\
12  & \ding{52} & \ding{52} & \ding{52} & \ding{55} \\
14  & \ding{52} & \ding{52} & \ding{52} & \ding{55}\\
16  & \ding{52} & \ding{52} & \ding{52} & \ding{55} \\
18 & \ding{52} & \ding{52} & \ding{52} & \ding{55} \\
20 & \ding{52} & \ding{52} & \ding{52} & \ding{52}   \\
\hline
\textbf{MC} & \multicolumn{4}{c|}{\textbf{0.8445}} \\
\hline
\end{tabular}
\end{table}

\end{example}
 
\section{Conclusion}
In this paper, we studied uniform reach-avoid verification for stochastic discrete-time systems over compact initial sets. Under appropriate assumptions, we obtained the necessity result of barrier-like conditions involving a discount factor for uniform reach-avoid verification. We also established the necessity of an alternative barrier-like formulation involving an auxiliary function. Both formulations were demonstrated through SOS-based computations. 

Future work will focus on relaxing the assumptions and developing computational methods for higher-dimensional nonlinear stochastic systems.

\bibliographystyle{splncs04}
\bibliography{ref}

\end{document}